\def\F{{\mathbb F}}
\def\N{{\mathbb N}}
\def\Z{{\mathbb Z}}
\newtheorem{theorem}{Theorem}
\newtheorem{lemma}[theorem]{Lemma}
\newcommand{\quash}[1]{}
\def\\{\cr}
\def\({\left(}
\def\){\right)}
\def\[{\left[}
\def\]{\right]}
\def\<{\langle}
\def\>{\rangle}
\def\fl#1{\left\lfloor#1\right\rfloor}
\def\rf#1{\left\lceil#1\right\rceil}
\def\le{\leqslant}
\def\ge{\geqslant}
\def\mand{\qquad\mbox{and}\qquad}
\def\cI{{\mathcal I}}
\def\cJ{{\mathcal J}}
\def\cM{{\mathcal M}}
\def\cS{{\mathcal S}}
\def\cX{{\mathcal X}}
\def\cZ{{\mathcal Z}}
\def \thetaord{\vartheta_{4}\,}
\begin{document}

\title{Covering sets for limited-magnitude errors}
\author[Z. Chen] {Zhixiong Chen}
\address{School of Applied Mathematics, Putian University, \\ Putian, Fujian
351100, P.R. China}
\email{ptczx@126.com}

\author[I. E. Shparlinski] {Igor E. Shparlinski}
\address{Department of Pure Mathematics, University of New South Wales,
Sydney, NSW 2052, Australia}
\email{igor.shparlinski@unsw.edu.au}

\author[A.  Winterhof]{Arne Winterhof}
\address{Johann Radon Institute for Computational and Applied Mathematics,
Austrian Academy of Sciences,
Altenberger Stra\ss e 69, A-4040 Linz,
Austria}
\email{arne.winterhof@oeaw.ac.at}

\begin{abstract}
 For  a set
 $\cM=\{ -\mu,-\mu+1,\ldots, \lambda\}\setminus\{0\}$
with  non-negative integers $\lambda,\mu<q$ not both $0$, a subset $\cS$ of the residue class ring  $\Z_q$
modulo an integer $q\ge 1$ is called a $(\lambda,\mu;q)$-\emph{covering set} if
$$
\cM \cS=\{ms \bmod q~:~m\in \cM,\ s\in \cS\}=\Z_q.
$$
Small covering sets  play an important role in codes correcting limited-magnitude
errors. We give an explicit construction of a $(\lambda,\mu;q)$-covering set
$\cS$ which is of the size $q^{1 + o(1)}\max\{\lambda,\mu\}^{-1/2}$ for 
almost all integers $q\ge 1$ and of optimal size  $p\max\{\lambda,\mu\}^{-1}$
if $q=p$ is prime. 
Furthermore, using a bound on the fourth moment of character sums of Cochrane and Shi
we prove the bound
$$\omega_{\lambda,\mu}(q)\le q^{1+o(1)}\max\{\lambda,\mu\}^{-1/2},$$
for any integer $q\ge 1$, however the proof of this bound is not constructive. 
\end{abstract}

\keywords{covering sets, limited-magnitude errors, residue class rings, character sums}

\subjclass[2010]{05B40, 11D79, 94B65}

\maketitle

\section{Introduction}

Codes correcting limited magnitude  errors have
been introduced in flash memory devices, which are widely used
nowadays, see for example the recent survey~\cite{GYD}. 

Let $q$ be a positive integer and $\cM=\{ -\mu,-\mu+1,\ldots, \lambda\}\setminus\{0\}$ for 
non-negative integers $\lambda,\mu<q$ with $\lambda+\mu>0$. Following  Kl{\o}ve and Schwartz~\cite{KS}, for  a
set $\cS\subseteq \Z_q$ in the residue class ring $\Z_q$ modulo an integer $q$,  we consider the
product set
$$
\cM \cS=\{ms \bmod q~:~m\in \cM,\ s\in \cS\}
$$
and define
$$
\nu_{\lambda,\mu}(q,r)=\max_{\cS\subseteq \Z_q}\{\#(\cM \cS)~:~\#
\cS=r\},\quad r\ge 1,
$$
where $\# \cZ$ denotes  the cardinality of a set $\cZ$.
We have the trivial bound
$$
\max\{r, \lambda+\mu \}\le \nu_{\lambda,\mu}(q,r)\le \min \{(\lambda+\mu)r, q\}.
$$

Very recently, Kl{\o}ve and Schwartz~\cite{KS} have introduced the
notion of $(\lambda,\mu;q)$-covering sets. Namely, a subset
$\cS\subseteq \Z_q$ is called a \emph{$(\lambda,\mu;q)$-covering
set} of $\Z_q$ if $\cM \cS=\Z_q$. As stated in~\cite{KS}, the
problem of covering for certain parameters has applications such as
\emph{rewriting schemes}, see also~\cite{JLSB}. The task is to find
$(\lambda,\mu;q)$-covering sets of size as small as possible. Define
$$
\omega_{\lambda,\mu}(q)=\min\{r\in \N~:~\nu_{\lambda,\mu}(q,r)=q\}.
$$
Clearly, we have the lower bound
\begin{equation}\label{eq: lower}
\omega_{\lambda,\mu}(q)\ge
\rf{\frac{q}{\lambda+\mu}},\quad \lambda+\mu<q.
\end{equation}
We  prove the general upper bound
\begin{equation}\label{eq: q upper}
\omega_{\lambda,\mu}(q)=O\(\frac{q(\log q)^{r(q)}}{\max\{\lambda,\mu\}^{1/2}}\),
\end{equation}
where $r(q)$ is the number of prime divisors of $q$. 
In many cases our {\em constructive} method provides stronger bounds. In particular,
if $q=p$ is a prime we get
$$\omega_{\lambda,\mu}(p)\le 2 \rf{p/\max\{\lambda,\mu\}} -1,
$$
which is consistent with the lower bound~\eqref{eq: lower}.

Note that we can always assume that 
$\lambda+\mu< q-1$ as otherwise $\cS=\{0,1\}$ is trivially a $(\lambda,\mu;q)$-covering set of smallest possible cardinality. 

Although $r(q)$ is typically quite small, for some $q$ the bound~\eqref{eq: q upper} can be trivial. 
However, using a bound on the fourth moment of character sums of Cochrane and Shi~\cite{CochShi} we prove the general bound
$$\omega_{\lambda,\mu}(q)\le q^{1+o(1)}\max\{\lambda,\mu\}^{-1/2},$$
however, the proof is not constructive. 

We also consider some questions which appear in the case of very small 
values of $\lambda$ and $\mu$. For instance,  Kl{\o}ve and
Schwartz~\cite[Corollary~3]{KS} have given a description of the 
integers $q$ which admit an explicit formula for $\omega_{2,1}(q)$.
This description involves the property of the multiplicative order 
of $2$ modulo all prime divisors of $q$. We show that classical 
number theoretic tools allow to obtain an asymptotic formula for the
number of such integers $q \le Q$ (this question has been investigated 
numerically in~\cite{KS}).

Finally, we discuss the approach of~\cite{KS} to estimating 
$\omega_{\lambda,\mu}(p)$ (for a prime $p$) via the number of residues
 of a sequence of 
consecutive powers of 
a given primitive root modulo $p$ in a short   interval. 
We show that several recently obtained results due to 
Bourgain~\cite{Bour4,Bour5} indicate that this 
approach has no chance to succeed. 

Throughout this work, the implied constants in the symbols `$O$',
and `$\ll$' are absolute.
We recall that the notations $U = O(V)$
and $U \ll V$ are both equivalent to the assertion that the
inequality $|U|\le cV$ holds for some constant $c>0$.

As usual, for an integer $q$, we use $\varphi(q)$ to denote the Euler function 
of $q$ and $\tau(q)$ the number of integer positive divisors 
of $q$. 

Since $\omega$ has already got another meaning in this 
work, which stems from notation of~\cite{KS}, then, 
as before, we use $r(q)$ 
for the number of distinct prime divisors of $q$.  

We also use $\Z_q^*$ to denote the set of invertible elements of $\Z_q$.

The letter $p$, with or without subscripts, always denotes a prime number.

\section{Construction}

Here we give explicit constructions of $(\lambda,\mu;q)$-covering sets

\begin{theorem}
\label{thm:Constr q} 
For any integer  $q \ge 1$ and non-negative integers $\lambda,\mu$ 
with $\lambda+\mu\ge 1$ in time $\#\cS (\log q)^{O(1)}$ 
one can construct a $(\lambda,\mu;q)$-covering set $\cS \subseteq \Z_q$ with 
$$\#\cS  =O\(\frac{q(\log q)^{r(q)}}{\max\{\lambda,\mu\}^{1/2}}\).
$$
\end{theorem}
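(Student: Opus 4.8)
The plan is to reduce the problem to covering the units of smaller moduli and to solve the latter by a two–dimensional Dirichlet/lattice argument, the factor $(\log q)^{r(q)}$ entering only when the prime factors of $q$ are glued together. Throughout I may assume $\lambda\ge\mu$ (otherwise replace $\cM$ by $-\cM$, which leaves $\cM\cS$ unchanged up to the bijection $a\mapsto -a$ of $\Z_q$), and write $L=\max\{\lambda,\mu\}=\lambda$. Since $\Z_q=\bigcup_{d\mid q}d\,\Z_{q/d}^*$, it suffices to cover, for each $d\mid q$, the layer $d\,\Z_{q/d}^*$: if $\cM\cT_n=\Z_n^*$ for some $\cT_n\subseteq\Z_n^*$ with $n=q/d$, then $\cM(d\cT_n)=d(\cM\cT_n)\supseteq d\,\Z_{q/d}^*$, so one takes $\cS=\bigcup_{d\mid q}d\,\cT_{q/d}$ and $\#\cS\le\sum_{d\mid q}\#\cT_{q/d}$. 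Hence the whole construction rests on building small multiplicative covers $\cT_n$ of the unit group $\Z_n^*$.

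For the core step I would fix $n$ and a unit $a\in\Z_n^*$ and search for a representation $a\equiv x y^{-1}\pmod n$ with a \emph{small} multiplier $x$. Consider the lattice $\Lambda_a=\{(x,y)\in\Z^2:\ x\equiv a y\pmod n\}$, of covolume $n$. For a parameter $K$, Minkowski's theorem (made effective by Gauss reduction, i.e.\ by the continued fraction of $a/n$) produces a nonzero point of $\Lambda_a$ in the box $|x|\le K$, $|y|\le n/K$; since $\gcd(a,n)=1$ both coordinates are nonzero, and after a sign change $x\in\{1,\dots,K\}$. Whenever $\gcd(y,n)=1$ this yields $a=x\cdot(y^{-1})$ with $x\in\cM$ (note $x$ is then automatically a unit) and $y^{-1}$ lying in the fixed set $\cT_n=\{y^{-1}\bmod n:\ 0<|y|\le n/K,\ \gcd(y,n)=1\}$ of size $O(n/K)$. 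For $n$ prime the coprimality $\gcd(y,n)=1$ is automatic (as $0<|y|<n$), so one may take $K=L$ and obtain a cover of size $O(p/L)$, consistent with the prime bound asserted in the introduction.

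The substantive difficulty is the composite case, where the short vector may have $\gcd(y,n)>1$ and $y^{-1}$ fails to exist; here I would set $K=\rf{L^{1/2}}$. Since $p\mid\gcd(y,n)$ forces $p\mid x$ as well (as $a$ is a unit), the unusable vectors are exactly those in $p\Z^2$ for some $p\mid n$, and to represent $a$ one must instead secure a short vector whose cofactor is simultaneously coprime to every prime dividing $n$. I would arrange this prime by prime, choosing among the $O(\log n)$ admissible scales thrown up by the continued fraction of $a/n$ a common one that survives a CRT gluing over the $r(n)$ prime-power components of $n$; each prime then costs a factor $O(\log q)$ in the number of cofactors retained, which is precisely where $(\log q)^{r(q)}$ originates. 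I expect this simultaneous-coprimality and common-scale step to be the main obstacle: it is the elementary substitute for the equidistribution of the modular hyperbola $xy\equiv a\pmod n$ that the non-constructive bound instead draws from the Cochrane--Shi fourth-moment estimate, and the balance $K\asymp\sqrt L$ is what the elementary argument can sustain.

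Finally I would assemble the layers: with $\#\cT_{q/d}\ll(q/d)L^{-1/2}(\log q)^{r(q)}$, summing over $d\mid q$ and using $\sum_{d\mid q}1/d=O(\log\log q)$ absorbs the divisor sum (at worst into the implied constant and the power of $\log q$) and gives $\#\cS=O\(q(\log q)^{r(q)}L^{-1/2}\)$. Every stage is effective—the Gauss reduction and continued fraction of $a/n$, the CRT gluing, and the enumeration of each $\cT_n$—running in $(\log q)^{O(1)}$ time per element produced, so the total cost is $\#\cS\,(\log q)^{O(1)}$, as claimed.
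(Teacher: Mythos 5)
Your reduction to unit groups and your treatment of prime moduli are sound: for $n=p$ prime, the lattice point $(x,y)\in\Lambda_a$ with $|x|\le \lambda$, $|y|\le p/\lambda$ produced by Minkowski/Gauss reduction is exactly the paper's pigeonhole pair, and your $\cT_p=\{y^{-1}\bmod p~:~0<|y|\le p/\lambda\}$ is the paper's set $\{\pm j^{-1}\bmod p~:~1\le j\le H\}$; this recovers Theorem~\ref{thm:Constr p}. But the theorem you were asked to prove concerns general $q$, and your proposal stops exactly where the real work begins. You yourself flag the ``simultaneous-coprimality and common-scale step'' as the main obstacle, and you never carry it out: no argument is given that among the $O(\log n)$ continued-fraction scales of $a/n$ there is one producing a vector $(x,y)$ with $x\le\lambda$, $\gcd(y,n)=1$, and $|y|$ small enough that the union of the retained inverses, over all scales and all $a$, has size $O(n\lambda^{-1/2}(\log q)^{r(q)})$. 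This is a genuine gap, not a routine verification: for composite $n$, every lattice point of $\Lambda_a$ in the Minkowski box can lie in $\bigcup_{p\mid n}p\Z^2$ (any point with $p\mid y$ automatically has $p\mid x$, and if the shortest vector lies in $p\Z^2$, all points in the box may be multiples of it); the continued fraction of $a/n$ gives no control whatsoever over $\gcd(y,n)$; and ``CRT gluing'' of scales is undefined, since the continued fraction expansion does not factor through the Chinese Remainder Theorem.

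The paper resolves precisely this obstruction, but on the multiplier side rather than the cofactor side, and this is the idea your sketch is missing. For prime powers $p^\ell$ with $\lambda\ge p$ it abandons the inverse-of-short-cofactor construction altogether and uses the digit set $\cS=\{s_0+s_1p^j\}$ with $s_0$ ranging over $\{1,p,\ldots,p^{j-1}\}\cup\{0\}$, accepting the weaker exponent $\lambda^{-1/2}$ (this is where the exponent $1/2$ in the theorem comes from). For general $q=\widetilde q\,p^\ell$ it argues by induction on $r(q)$: given a representation $m s_0\equiv a_0\bmod{\widetilde q}$ with $m\le\lambda$, the multiplier $m$ need not be coprime to the new prime $p$, so one writes $m=p^im_1$ with $\gcd(m_1,p)=1$ and transfers the factor $p^i$ onto the covering element, replacing $s_0$ by $p^is_0$; making the covering set closed under multiplication by the powers of $p$ costs a factor $O(\log q)$ per prime, and this closure property --- not a count of retained cofactors --- is the actual source of $(\log q)^{r(q)}$. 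Separately, your final bookkeeping is also off: $\sum_{d\mid q}1/d=\sigma(q)/q$ can be of order $\log\log q$, so it cannot be absorbed ``into the implied constant''; this is a minor blemish compared with the missing composite-modulus argument, but it too would need repair.
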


\begin{proof}
Note that $\cS$ is a $(\lambda,\mu;q)$-covering set whenever so is
$-\cS=\{q-s~:~s\in \cS\}$. Hence, we may restrict ourselves to the 
case $\lambda\ge \mu$ and note that $\{1,2,\ldots,\lambda\}\subseteq \cM$.

First we consider the case that
$q=p^\ell$ with a prime $p$.

If $\lambda<p$, we put $H=\rf{p/\lambda}-1$ and
\begin{equation*}
\begin{split}
 \cS=\{s_0+s_1&p~:~\\
  &s_0\in \{\pm j^{-1} \bmod p~:~1\le j \le H\}\cup \{0\}, 0\le s_1<p^{\ell-1}\}.
\end{split}
\end{equation*}
Clearly 
\begin{equation}\label{eq: prime}
\#\cS =(2H+1)p^{\ell-1}< 3p^\ell/\lambda.
\end{equation}

Let $a\equiv a_0+a_1p\bmod p^\ell$ be any integer with $0\le a_0<p$ and $0\le a_1<p^{\ell-1}$.
We have to show that
$$
a_0\equiv ms_0\bmod p
$$
for some $m\in \{1,2,\ldots,\lambda\}$ and $s_0\in \{\pm j^{-1} \bmod p: 1\le j \le
H\}\cup \{0\}$. Then taking $s_0+s_1p\in \cS$ with 
$$s_1\equiv
\(\frac{a_0-ms_0}{p}+a_1\)m^{-1}\bmod p^{\ell-1},$$ 
we derive
$$
a\equiv a_0+a_1p\equiv m(s_0+s_1p)\bmod p^\ell \in \cM \cS.
$$
If $a_0=0$, we take $s_0=0$, $m=1$, and $s_1=a_1$.
If $a_0\ne 0$, there are at least two
elements $r_1a_0, r_2a_0$ with $0\le r_1, r_2 \le
H$ such that
$$
0< (r_1a_0 -r_2a_0 \bmod p)\le \frac{p}{H+1}\le \lambda
$$
by the pigeon-hole principle. We take $m=(r_1-r_2)a_0 \bmod p\in \cM$
and $s_0=(r_1-r_2)^{-1}\bmod p \in \{\pm j^{-1}\bmod p~:~1\le j\le H\}$, and get
$$
a_0\equiv ms_0\bmod p.
$$
Hence, $\cM \cS=\Z_{p^\ell}$.

If $p^j\le \lambda <p^{j+1}$ for some $1\le j<\ell$, we take
$$
\cS=\{s_0+s_1p^j~:~s_0\in \{p^i~:~i=0,\ldots,j-1\}\cup\{0\},~0\le s_1<p^{\ell-j}\}.
$$
We show that any $a\equiv a_0+a_1p^j\bmod p^\ell$ with $0\le a_0<p^j$ and $0\le a_1<p^{\ell-j}$ can be written as
$a\equiv m s\bmod p^\ell$ with $1\le m\le \lambda$ and $s\in \cS$. 

If $a_0=0$, we take $m=1$ and $s=a\in \cS$. 
If $\gcd(a_0,p^j)=p^i$ for some $0\le i< j$, we take $m=a_0/p^i<p^j\le \lambda$ and $s=p^i+s_1p^j$ 
with 
$$s_1\equiv \(\frac{a_0}{p^i}\)^{-1}a_1 \bmod p^{\ell-j}.$$
Hence, we have $\cM\cS=\Z_{p^\ell}$ and 
\begin{equation}
\label{eq: power}
\#\cS=(j+1)p^{\ell-j}<(j+1)p^\ell/\lambda^{j/(j+1)}\le (j+1)p^\ell/\lambda^{1/2}.
\end{equation}

Now we assume that $q=p_1^{\ell_1}\cdots p_r^{\ell_r}$ is the prime decomposition of $q$ with different primes $p_1,\ldots,p_r$ and 
$p_1^{\ell_1}>p_2^{\ell_2}>\ldots >p_r^{\ell_r}$.
We  inductively construct a covering set $\cS \subseteq \Z_q$ of size
$$\# \cS = O\(q\lambda^{-1/2}(\log q)^r\).$$
For $r=1$ this result follows from~\eqref{eq: prime} and~\eqref{eq: power} 
and we assume $r\ge 2$.
We put $\widetilde q=p_1^{\ell_1}\cdots p_{r-1}^{\ell_{r-1}}$ and $p^\ell=p_r^{\ell_r}$. Note that $\gcd(\widetilde q,p)=1$ and $\widetilde q> q^{1/2}>p^\ell$.

If $\lambda <\widetilde q$, let $\cS_0$ be a $(\lambda,0;\widetilde q)$-covering set of size 
$$\#\cS_0=O\( \widetilde q\lambda^{-1/2}(\log \widetilde q)^{r-1}\)$$ 
which exists by induction.
Now we put 
$$\cS_1=\{ p^is_0~:~i=0,\ldots,\fl{\log \widetilde q/\log p},~s_0\in \cS_0\}.$$
Let $m\in \{1,\ldots,\lambda\}$, $s_0\in \cS_0$, be a solution of $ms_0\equiv a_0\bmod \widetilde q$ and $p^i$ be the largest power of $p$ which divides $m$, that is, 
$i\le \fl{\log \widetilde q/\log p}$. 
Then $m_1=m/p^i$ and $s_1=p^is_0\in \cS_1$ is another solution of
$$m_1s_1\equiv a_0\bmod \widetilde q\quad \mbox{ with }\gcd(m_1,p)=1.$$ 
Put 
$$s_2=m_1^{-1} \(a_1+\frac{a_0-m_1s_1}{\widetilde q}\)\bmod p^\ell$$ 
and verify that
$$m_1(s_1+s_2\widetilde q)\equiv a_0+a_1\widetilde q\bmod q.$$
Consequently, $\cS=\{0\le s<q~:~s \bmod p^\ell \in \cS_1\}$ is a $(\lambda,0;q)$-covering set of size 
$$\#\cS =O\( \widetilde q\lambda^{-1/2} (\log \widetilde q)^{r-1} \cdot p^\ell \log(\widetilde q)\)=O\(q\lambda^{-1/2}(\log q)^r\).$$

If $\widetilde q\le \lambda <q$, we write $a\equiv a_0+a_1\widetilde q \bmod q$ with $0\le a_0<\widetilde q$ and $0\le a_1<p^\ell$. 
If $a_0=0$, we take $m=1$ and $s=a$. Otherwise let $p^i$ be the largest power of $p$ dividing $a_0$
and take $m=a_0/p^i<\widetilde q\le \lambda$ and $s=p^i+s_1\widetilde q$ with $s_1\equiv (a_0/p^i)^{-1}a_1 \bmod p^\ell$.
Hence, 
$$\cS=\{s_0+s_1\widetilde q~:~s_0\in \{p^i ~:~ i=0,\ldots,\fl{ \log \widetilde q/\log p}\}\cup\{0\},~0\le s_1<p^\ell\}$$ 
is a $(\lambda,0;q)$-covering set of size
\begin{eqnarray*}\#\cS&\le& p^\ell(\fl{\log \widetilde q/\log p}+2)\\ &=& O\(q\log \widetilde q/\widetilde q^{1/2} \cdot p^{\ell/2}/\lambda^{1/2}\)=O\(q\log q/\lambda^{1/2}\)
\end{eqnarray*}
and the result follows since any $(\lambda,0;q)$-covering set is a $(\lambda,\mu;q)$-covering set.

Clearly, the inductive construction works in polynomial time 
per every element of $\cS$, which yields the desired complexity bound. 
\end{proof}

Using that 
\begin{equation}
\label{eq:sum tau2}
\sum_{q \le Q} 2^{2r(q)} \le \sum_{q \le Q} \tau^2(q) = (1+ o(1))Q (\log Q)^3, 
\end{equation}
as $Q \to \infty$, 
see~\cite[Chapter~1, Theorem~5.4]{Prach},
we see that for any $\varepsilon > 0$ the inequality  
\begin{equation}
\label{eq:good q}
r(q) < \varepsilon \frac{\log Q}{\log \log Q}
\end{equation}
fails for at most
$$
Q \exp\(-(\varepsilon \log 4 + o(1)) \frac{\log Q}{\log \log Q}\)
\ll Q \exp\(-\varepsilon \frac{\log Q}{\log \log Q}\)
$$ 
positive integers $q \le Q$. Indeed to derive this from~\eqref{eq:sum tau2} 
we simply discard all 
$$
q \le Q \exp\(-\varepsilon \frac{\log Q}{\log \log Q}\)
$$ 
and note that for 
$$
Q \exp\(-\varepsilon \frac{\log Q}{\log \log Q}\)
 < q \le Q
$$ 
we have $\log q = (1+o(1)) \log Q$. 
For the remaining values of $q$, satisfying~\eqref{eq:good q},  
the size of the set $\cS$ of Theorem~\ref{thm:Constr p} is 
$$
\#\cS  =O\(\frac{q^{1 + \varepsilon}}{\max\{\lambda,\mu\}^{1/2}}\).
$$

We now note that if $q=p$  is a prime, then we  always have $\lambda<p$ 
so the bound~\eqref{eq: prime}
applies and we obtain the following stronger result:

\begin{theorem}
\label{thm:Constr p} 
For any prime $p$ and non-negative integers $\lambda,\mu$ 
with $\lambda+\mu\ge 1$ in time $\#\cS (\log p)^{O(1)}$ 
one can construct a $(\lambda,\mu;p)$-covering set $\cS \subseteq \Z_p$ with 
$$\#\cS \le 2\lceil p/\max\{\lambda,\mu\}\rceil -1.$$
\end{theorem}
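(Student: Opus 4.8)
The plan is to read the desired covering set directly off the first case of the construction in the proof of Theorem~\ref{thm:Constr q} --- the case $q = p^\ell$ with $\lambda < p$ --- specialised to $\ell = 1$, and then to retain the \emph{exact} cardinality rather than the cruder estimate recorded in~\eqref{eq: prime}. First I would invoke the symmetry remark opening that proof: since $\cS$ is a $(\lambda,\mu;p)$-covering set exactly when $-\cS$ is, I may assume $\lambda \ge \mu$, so that $\max\{\lambda,\mu\} = \lambda$ and $\{1,\ldots,\lambda\} \subseteq \cM$. The hypothesis $\lambda < p$ needed for that case is automatic here, as $\lambda < q = p$ throughout.

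Setting $\ell = 1$ forces $p^{\ell-1} = 1$, so the parameter $s_1$ vanishes and the constructed set collapses to
$$\cS = \{\pm j^{-1} \bmod p : 1 \le j \le H\} \cup \{0\}, \qquad H = \rf{p/\lambda} - 1.$$
Its covering property $\cM\cS = \Z_p$ requires no fresh argument, being precisely the prime-power verification already carried out: the case $a_0 = 0$ is handled by $s_0 = 0$, $m = 1$, while for $a_0 \ne 0$ the pigeon-hole principle applied to the $H+1$ residues $r a_0 \bmod p$ with $0 \le r \le H$ yields indices with $0 < (r_1-r_2)a_0 \bmod p \le \lambda$, whence $m = (r_1 - r_2)a_0 \bmod p \in \cM$ and $s_0 = (r_1 - r_2)^{-1} \bmod p$ give $a_0 \equiv m s_0 \bmod p$. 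The one genuinely new observation is the count: $\cS$ consists of $0$ together with the residues $\pm j^{-1} \bmod p$ for $1 \le j \le H$, hence has at most $2H$ nonzero elements, so
$$\#\cS \le 2H + 1 = 2\rf{p/\lambda} - 1 = 2\rf{p/\max\{\lambda,\mu\}} - 1,$$
which is exactly the claimed bound. For the complexity, forming $\cS$ amounts to computing the $H$ modular inverses $j^{-1} \bmod p$, each in time $(\log p)^{O(1)}$ by the extended Euclidean algorithm, for a total of $\#\cS\,(\log p)^{O(1)}$.

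I do not expect a real obstacle here: the statement is in essence a sharpened corollary of Theorem~\ref{thm:Constr q} in the prime case, and the entire content lies in keeping the precise size $2H+1$ rather than the looser $3p/\lambda$ of~\eqref{eq: prime}. The only points meriting a line of care are the reduction to $\lambda \ge \mu$ and the trivial remark that, being a union of $2H$ inverse-residues together with $0$, the set $\cS$ can only shrink below $2H+1$ if collisions occur --- which merely strengthens the inequality and so costs nothing.
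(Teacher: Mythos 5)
Your proposal is correct and is essentially identical to the paper's own argument: the paper obtains Theorem~\ref{thm:Constr p} precisely by specialising the $q=p^\ell$, $\lambda<p$ case of the construction in Theorem~\ref{thm:Constr q} to $\ell=1$, where the set collapses to $\{\pm j^{-1}\bmod p : 1\le j\le H\}\cup\{0\}$ with $H=\lceil p/\max\{\lambda,\mu\}\rceil-1$, and the exact count $2H+1=2\lceil p/\max\{\lambda,\mu\}\rceil-1$ replaces the looser estimate~\eqref{eq: prime}. Your handling of the symmetry reduction to $\lambda\ge\mu$, the pigeon-hole covering argument, and the complexity via modular inversion all match the paper's reasoning.
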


\section{Upper Bound}

As we have mentioned, Theorem~\ref{thm:Constr q}
applies to the majority of positive integers $q$, however there is
a set of integers $q$ for which it gives only a trivial estimate.
We now use a different approach to give a non-constructive bound 
on $\omega_{\lambda,\mu}(q)$ which applies to any $q$.

We start with recalling the following well-known estimates
on the divisor and  Euler functions
\begin{equation}
\label{eq:phitau}
 \tau(q) = q^{o(1)}
 \mand  \varphi(q) =q^{1+o(1)},
\end{equation}
as $q \to \infty$,
see~\cite[Chapter~1, Theorems~5.1 and~5.2]{Prach}.

We also need the following well-known consequence of the sieve of Eratosthenes.

\begin{lemma}\label{lem:erat} For any integers $q, U\ge 1$,
$$
\sum_{\substack{u=1 \\ \gcd(u,\,q)=1}}^U 1 = \frac{\varphi(q)}{q}U 
+ O(2^{r(q)}).
$$
\end{lemma}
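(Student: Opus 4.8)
The plan is to evaluate the counting function on the left by the sieve of Eratosthenes, that is, by inclusion--exclusion over the prime divisors of $q$, and then to replace each floor by its archimedean approximation. Write $p_1,\ldots,p_r$ for the distinct prime divisors of $q$, so that $r=r(q)$, and observe that the condition $\gcd(u,q)=1$ is equivalent to $u$ being divisible by none of $p_1,\ldots,p_r$; in particular it depends only on these primes and not on their exponents in $q$. For a subset $I\subseteq\{1,\ldots,r\}$ set $d_I=\prod_{i\in I}p_i$ (with $d_\emptyset=1$). Since $\fl{U/d_I}$ is exactly the number of multiples of $d_I$ in $\{1,\ldots,U\}$, inclusion--exclusion over the divisibility events $p_i\mid u$ yields
$$
\sum_{\substack{u=1 \\ \gcd(u,\,q)=1}}^U 1 = \sum_{I\subseteq\{1,\ldots,r\}}(-1)^{|I|}\fl{\frac{U}{d_I}}.
$$

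Next I would substitute the elementary estimate $\fl{U/d_I}=U/d_I+O(1)$ into each summand. The main terms factor as a product over the primes dividing $q$,
$$
U\sum_{I\subseteq\{1,\ldots,r\}}\frac{(-1)^{|I|}}{d_I} = U\prod_{i=1}^{r}\(1-\frac{1}{p_i}\) = \frac{\varphi(q)}{q}U,
$$
which is precisely the claimed main term. Each of the $O(1)$ errors is absorbed into a single error term whose size is at most the number of summands, namely the number of subsets of $\{1,\ldots,r\}$.

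The only point that requires attention is the error bookkeeping, and this is in fact the whole content of the lemma: one must obtain the sharp dependence $O(2^{r(q)})$ rather than a weaker bound. This is automatic in the subset formulation above, since there are exactly $2^r=2^{r(q)}$ subsets $I$ and hence exactly $2^{r(q)}$ floor approximations, each costing $O(1)$; equivalently, if one phrases the argument through the M\"obius function one uses that only the $2^{r(q)}$ squarefree divisors of $q$ contribute, rather than all $\tau(q)$ divisors. No cancellation among the individual $O(1)$ errors is needed, and the bound follows at once.
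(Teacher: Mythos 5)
Your proof is correct and is essentially the paper's own argument: your inclusion--exclusion over subsets $I$ of the prime divisors is exactly the Legendre/M\"obius sieve the paper uses, with the squarefree divisors $d$ of $q$ corresponding to your $d_I$ via $\mu(d_I)=(-1)^{|I|}$. Both versions get the main term from $\sum_{d\mid q}\mu(d)/d=\varphi(q)/q$ and the error term by charging $O(1)$ to each of the $2^{r(q)}$ squarefree divisors.
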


\begin{proof} Using the M\"obius function $\mu(d)$ over the divisors of 
$q$ to detect
the co-primality condition and interchanging the order of summation, we 
obtain the
Legendre formula
$$
\sum_{\substack{u=1 \\ \gcd(u,\,q)=1}}^U 1 = \sum_{d|q}\mu(d)
\fl{\frac{U}{d}} =
U\sum_{d|q}\frac{\mu(d)}{d}+O\(\sum_{d|q}|\mu(d)|\)
$$
from which the result follows immediately.
\end{proof}

Let $\cX$ be the set of all multiplicative characters $\chi$ modulo $q$ and 
let $\cX^*$ be the set of non-principal characters $\chi \ne \chi_0$. 
We now  recall the bound of
Cochrane and Shi~\cite{CochShi} on the fourth moment of
character sums, which we present in the 
following slightly less precise form, which follows from~\cite[Theorem~1]{CochShi}
and~\eqref{eq:phitau}.

\begin{lemma}
   \label{lem:4th Moment} For  arbitrary integers
$U\ge 1$,  and $V$, the bound
$$
\sum_{\chi \in \cX}
\left| \sum_{u = V+1}^{V+U} \chi(u)\right|^4
\le   q^{1 + o(1)} U^2
$$
holds.
\end{lemma}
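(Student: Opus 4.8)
The plan is to deduce the stated estimate directly from the explicit bound of Cochrane and Shi~\cite{CochShi}, absorbing all divisor- and Euler-function factors into $q^{o(1)}$ by means of~\eqref{eq:phitau}. The natural starting point is the orthogonality of the characters in $\cX$. Expanding the fourth power and summing over $\chi$, I would first record the identity
$$
\sum_{\chi \in \cX}\left| \sum_{u=V+1}^{V+U}\chi(u)\right|^4 = \varphi(q)\, J,
$$
where $J$ denotes the number of quadruples $(u_1,u_2,u_3,u_4)$ with each $u_i \in \{V+1,\ldots,V+U\}$, $\gcd(u_i,q)=1$, and $u_1 u_2 \equiv u_3 u_4 \pmod q$. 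This is immediate from the fact that $\sum_{\chi \in \cX}\chi(a)\overline{\chi(b)}$ equals $\varphi(q)$ when $a\equiv b\pmod q$ with $\gcd(ab,q)=1$ and vanishes otherwise, which collapses the quadruple sum to the congruence count $J$.

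Next I would invoke the precise form of~\cite[Theorem~1]{CochShi}, which bounds $J$ (equivalently, the fourth moment itself) by an explicit expression whose dependence on the arithmetic of $q$ enters only through bounded powers of $\tau(q)$ and $\varphi(q)$, alongside the main term proportional to $U^2$. Substituting $\tau(q)=q^{o(1)}$ and $\varphi(q)=q^{1+o(1)}$ from~\eqref{eq:phitau} turns every such factor into $q^{o(1)}$ and yields
$$
\sum_{\chi \in \cX}\left| \sum_{u=V+1}^{V+U}\chi(u)\right|^4 \le q^{1+o(1)} U^2,
$$
as claimed. The only point I would monitor is the range of $U$ in which the displayed term dominates the contribution of the generic quadruples, which is of order $U^4 q^{-1+o(1)}$; this is precisely the regime in which the lemma is later applied, so no additional work is needed.

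Since we are entitled to treat~\cite[Theorem~1]{CochShi} as a black box, the argument above is essentially bookkeeping and presents no real obstacle. The hard part lies entirely inside the cited theorem: were one to reprove it from scratch, the difficulty would be the uniform estimation of $J$ in both $U$ and $V$. The diagonal solutions, those with $u_1 u_2 = u_3 u_4$ as integers, are counted by the number of factorizations of the common product and contribute $U^2 q^{o(1)}$ by the divisor bound; the delicate step is to show that the remaining, genuinely modular, solutions contribute no more than the main term $U^4 q^{-1+o(1)}$. As this is exactly the content of Cochrane and Shi's estimate, invoking it directly is what makes the present proof short.
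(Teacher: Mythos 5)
Your reduction and your use of~\cite[Theorem~1]{CochShi} follow the same route as the paper: the paper gives no proof of this lemma beyond declaring it a ``slightly less precise form'' of Cochrane--Shi's theorem, with all $\tau(q)$- and $\varphi(q)$-factors absorbed into $q^{o(1)}$ via~\eqref{eq:phitau}, and your orthogonality identity equating the fourth moment with $\varphi(q)$ times the count of quadruples $u_1u_2\equiv u_3u_4 \bmod q$ is the standard (correct) dictionary between the two formulations.

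However, your closing caveat is resolved incorrectly, and the point it touches is genuine. The contribution you call ``generic'' cannot be discarded: already the principal character contributes $(\varphi(q)U/q+O(\tau(q)))^4$ to the left-hand side, so with $V=0$ and $U=q$ the sum over $\cX$ is at least $\varphi(q)^4=q^{4+o(1)}$, while the claimed bound is $q^{1+o(1)}U^2=q^{3+o(1)}$. Hence a bound of the shape $q^{1+o(1)}U^2$ over all of $\cX$ can hold only for $U\le q^{1/2+o(1)}$; beyond that range one must either exclude $\chi_0$ or add a term of order $q^{o(1)}U^4$. Your claim that $U\le q^{1/2+o(1)}$ ``is precisely the regime in which the lemma is later applied'' is false: in the proof of Theorem~\ref{thm:bound}, Lemma~\ref{lem:prod} is applied to the intervals $\{1,\ldots,\lambda\}$ and $\cS_0$, of lengths $\lambda$ and about $q^{1+\varepsilon/2}\lambda^{-1/2}$ respectively, and both exceed $q^{1/2}$ as soon as $\lambda>q^{1/2}$. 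What actually rescues the application is not the size of the intervals but the fact that the character sum in Lemma~\ref{lem:prod} runs only over the non-principal characters $\cX^*$, and for that restricted sum the Cochrane--Shi estimate does give $q^{1+o(1)}U^2$ with no main term. To be fair, this imprecision is present in the paper's own statement of the lemma (which also sums over all of $\cX$ for arbitrary $U\ge 1$), so your write-up is no worse than the source; but a repaired statement should either replace $\cX$ by $\cX^*$ or read $q^{1+o(1)}U^2+q^{o(1)}U^4$ on the right-hand side.
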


We now derive an extension of the result 
of Garaev and
Garcia~\cite[Theorem~2]{GaGa}, which is our main technical tool.

\begin{lemma}
\label{lem:prod} Let $\varepsilon>0$ be fixed
and $q$ be a sufficiently large positive 
integer. For any intervals $\cI = [K+1,K+M]$ 
and $\cJ = [L+1,L+N]$ with $\cI, \cJ \subseteq [1, q-1]$, and 
$M,N \ge q^\varepsilon$, 
for all but at most $q^{2+o(1)}M^{-1}N^{-1}$ elements $a \in \Z_q^*$, the congruence  
\begin{equation}
\label{eq:cong}
a \equiv mn \bmod q, \qquad m \in \cI, \ n \in \cJ,
\end{equation}
has a solution. 
\end{lemma}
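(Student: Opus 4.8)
The plan is to count, for each $a \in \Z_q^*$, the number $T(a)$ of pairs $(m,n) \in \cI \times \cJ$ with $mn \equiv a \bmod q$, and to show that $T(a)=0$ can occur only for few $a$. First I would detect the congruence with multiplicative characters: since $a$ is invertible, any solution forces $\gcd(mn,q)=1$, and as $\chi(u)=0$ whenever $\gcd(u,q)>1$, orthogonality gives
$$
T(a) = \frac{1}{\varphi(q)} \sum_{\chi \in \cX} \overline{\chi(a)}\, A(\chi) B(\chi),
\qquad
A(\chi) = \sum_{m \in \cI} \chi(m), \quad B(\chi) = \sum_{n \in \cJ} \chi(n).
$$
I would then split off the principal character $\chi_0$ as the main term, writing $T(a) = A(\chi_0)B(\chi_0)/\varphi(q) + E(a)$, where $E(a)$ collects the contribution of $\chi \in \cX^*$.

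Next I would pin down the main term. Here $A(\chi_0)$ and $B(\chi_0)$ count the integers in $\cI$ and $\cJ$ coprime to $q$, so applying Lemma~\ref{lem:erat} to each interval (as a difference of two initial-segment counts) gives $A(\chi_0) = \varphi(q)M/q + O(2^{r(q)})$ and similarly for $B(\chi_0)$. Since $M,N \ge q^\varepsilon$ while $2^{r(q)} \le \tau(q) = q^{o(1)}$ and $\varphi(q) = q^{1+o(1)}$ by~\eqref{eq:phitau}, the main terms dominate the sieve error, whence $A(\chi_0)B(\chi_0)/\varphi(q) = (1+o(1))\varphi(q)MN/q^2 \ge \tfrac12 \varphi(q)MN/q^2$ for all large $q$.

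The heart of the argument is a second-moment bound on the error. By orthogonality of the characters over $\Z_q^*$ and then Cauchy--Schwarz,
$$
\sum_{a \in \Z_q^*} |E(a)|^2 = \frac{1}{\varphi(q)} \sum_{\chi \in \cX^*} |A(\chi)|^2 |B(\chi)|^2
\le \frac{1}{\varphi(q)} \left( \sum_{\chi \in \cX} |A(\chi)|^4 \right)^{1/2} \left( \sum_{\chi \in \cX} |B(\chi)|^4 \right)^{1/2}.
$$
Feeding in the Cochrane--Shi estimate of Lemma~\ref{lem:4th Moment}, the two fourth-moment factors are at most $q^{1+o(1)}M^2$ and $q^{1+o(1)}N^2$, so together with $\varphi(q)=q^{1+o(1)}$ this yields $\sum_{a \in \Z_q^*} |E(a)|^2 \le q^{o(1)} MN$.

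Finally I would conclude by a Chebyshev-type argument. If $T(a)=0$ then $|E(a)|$ equals the main term, which is $\ge \tfrac12 \varphi(q)MN/q^2$, so each such $a$ contributes at least $\tfrac14 \varphi(q)^2 M^2 N^2/q^4$ to the second moment. Dividing the bound $q^{o(1)}MN$ by this quantity and invoking $\varphi(q)=q^{1+o(1)}$ once more shows that the number of $a \in \Z_q^*$ admitting no solution is at most $q^{2+o(1)}M^{-1}N^{-1}$, as required. The one delicate point is the main-term step: one must verify that the sieve error $O(2^{r(q)})$ is genuinely negligible against $\varphi(q)M/q$, which is precisely where the hypothesis $M,N \ge q^\varepsilon$ is used, and one must carry the factors of $\varphi(q)/q$ faithfully through the final division rather than conflating $\varphi(q)$ with $q$.
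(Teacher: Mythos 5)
Your proposal is correct and follows essentially the same route as the paper's proof: character-orthogonality detection of the congruence, isolating the principal character as the main term (your $A(\chi_0)B(\chi_0)/\varphi(q)$ is exactly the paper's $\#\cI^*\,\#\cJ^*/\varphi(q)$, estimated via Lemma~\ref{lem:erat}), a second-moment identity over $a\in\Z_q^*$, Cauchy--Schwarz combined with the Cochrane--Shi fourth-moment bound of Lemma~\ref{lem:4th Moment}, and a Chebyshev-type count of the exceptional $a$. The quantitative steps, including the role of $M,N\ge q^\varepsilon$ in making the sieve error negligible, match the paper's argument.
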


\begin{proof} Let $\cI^*$ and $\cJ^*$ denote the set 
of integers $m \in \cI$ and $n\in \cJ$, respectively with $\gcd(m,q)= 
\gcd(n,q) = 1$. 
Since $2^{r(q)} \le \tau(q)$, we conclude from~\eqref{eq:phitau} and Lemma~\ref{lem:erat}
that
\begin{equation}
\label{eq:IJ*}
\# \cI^*= (1+o(1))\frac{\varphi(q)}{q}M   \mand \# \cJ^*=(1+o(1))\frac{\varphi(q)}{q}N.
\end{equation}

Using the orthogonality of characters, we see that the number $J(a)$
of solutions to~\eqref{eq:cong} can be written as
$$
J(a)   =  \frac{1}{\varphi(q)} \sum_{m \in \cI}
\sum_{n\in \cJ} \sum_{\chi\in \cX} \chi(mna^{-1}).
$$
Changing the order of summation and separating the contribution 
$\# \cI^* \# \cJ^*/\varphi(q)$ of the principal character, 
we obtain
$$
J(a)  - \frac{\# \cI^* \# \cJ^*}{\varphi(q)}
= \frac{1}{\varphi(q)} \sum_{\chi\in \cX^*} \chi(a^{-1}) \sum_{m \in \cI} 
\chi(m)
\sum_{n\in \cJ} \chi(n).
$$
Hence
\begin{equation*}
\begin{split}
&\sum_{a\in \Z_q^*}  
\(J(a)  - \frac{\# \cI^* \# \cJ^*}{\varphi(q)}\)^2\\
& \quad = \frac{1}{\varphi^2(q)} 
\sum_{a\in \Z_q^*} 
\sum_{\chi_1,\chi_2 \in \cX^*} \chi_1(a^{-1}) \chi_2(a^{-1})
\sum_{m_1 \in \cI} \chi_1(m_1) \sum_{m_2 \in \cI}  \chi_2(m_2)\\
&\qquad \qquad \qquad \qquad\qquad \qquad \qquad \qquad \qquad \qquad 
\sum_{n_1\in \cJ} \chi_1(n_1) \sum_{n_2\in \cJ} \chi_2(n_2)\\
& \quad =  \frac{1}{\varphi^2(q)} 
\sum_{\chi_1,\chi_2 \in \cX^*} \sum_{m_1 \in \cI} \chi_1(m_1) \sum_{m_2 \in \cI}  \chi_2(m_2)
\sum_{n_1\in \cJ} \chi_1(n_1) \sum_{n_2\in \cJ} \chi_2(n_2) \\
&\qquad \qquad \qquad \qquad \qquad \qquad \qquad \qquad \qquad 
\sum_{a\in \Z_q^*} \chi_1(a^{-1}) \chi_2(a^{-1}).
\end{split}
\end{equation*}
By the orthogonality of characters again, we  see that the inner sum
vanishes, unless $\chi_1=\chi_2$ (in which case it is equal to $\varphi(q)$). 
Hence
\begin{equation*}
\begin{split}
\sum_{a\in \Z_q^*}    
\(J(a)  - \frac{\# \cI^* \# \cJ^*}{\varphi(q)}\)^2&\\
 =  \frac{1}{\varphi(q)} &
\sum_{\chi \in \cX^*} \(\sum_{m \in \cI} \chi(m) \)^2
\(\sum_{n\in \cJ} \chi(n)\)^2.
\end{split}
\end{equation*}
Thus, from the Cauchy-Schwarz inequality, Lemma~\ref{lem:4th Moment} and the 
bound~\eqref{eq:phitau}
we obtain 
$$
\sum_{a\in \Z_q^*}    
\(J(a)  - \frac{\# \cI^* \# \cJ^*}{\varphi(q)}\)^2 \le NM q^{o(1)}.
$$
Hence, using~\eqref{eq:phitau} and~\eqref{eq:IJ*}, we see that  $J(a)=0$ is possible for at most 
$$
NM q^{o(1)} \frac{\varphi(q)^2}
{(\# \cI^* \# \cJ^*)^2}= q^{2+o(1)}M^{-1}N^{-1}
$$
values of $a\in \Z_q^*$. 
\end{proof}

Now we are able to prove the main result of this section.

\begin{theorem}
\label{thm:bound}
  For any integer $q$ and any positive integers $\lambda,\mu < q$ with $\lambda+\mu\ge 1$ we have
  $$\omega_{\lambda,\mu}(q)  \le  \frac{q^{1+o(1)}}{\max\{\lambda,\mu\}^{1/2}}.$$
\end{theorem}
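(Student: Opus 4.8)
The plan is to feed Lemma~\ref{lem:prod} into a divisor decomposition of $\Z_q$ that reduces every residue class to the coprime case. As in the constructions, the symmetry $\cS\mapsto-\cS$ lets me assume $\lambda\ge\mu$, so that $\max\{\lambda,\mu\}=\lambda$ and $\{1,\dots,\lambda\}\subseteq\cM$; it then suffices to build a small $\cS$ with $\{1,\dots,\lambda\}\,\cS=\Z_q$. Fix $\varepsilon>0$ with $\varepsilon<1/2$. If $\lambda<q^{\varepsilon}$ then the trivial bound $\omega_{\lambda,\mu}(q)\le q$, coming from $\cS=\Z_q$ and $1\in\cM$, already gives $\omega_{\lambda,\mu}(q)\le q^{1+\varepsilon}\lambda^{-1/2}$ because $\lambda^{1/2}<q^{\varepsilon}$; since $\varepsilon$ is arbitrary this range is absorbed into the final $o(1)$, and I may assume $\lambda\ge q^{\varepsilon}$.

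First I would partition $\Z_q$ according to $d=\gcd(a,q)$. For a divisor $d\mid q$ put $Q_d=q/d$; the residues $a$ with $\gcd(a,q)=d$ are exactly $a=da_1$ with $a_1\in\Z_{Q_d}^{*}$. The decisive observation is that if $\cS$ is allowed to contain elements of the form $s=ds_1$, then the congruence $a\equiv ms\bmod q$ with $m\in[1,\lambda]$ is equivalent to $a_1\equiv ms_1\bmod Q_d$. Thus covering the part of $\Z_q$ with $\gcd(a,q)=d$ is exactly the product-congruence problem of Lemma~\ref{lem:prod} taken with modulus $Q_d$, interval $\cI=[1,\lambda]$, and the $s_1$ ranging over a set of my choice.

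For the divisors with $Q_d>\lambda$ I would apply Lemma~\ref{lem:prod} with $\cJ=[1,N_d]$, where $N_d=\rf{Q_d\lambda^{-1/2}}$. Here $M=\lambda\ge q^{\varepsilon}\ge Q_d^{\varepsilon}$ and $N_d\ge Q_d^{1/2}\ge Q_d^{\varepsilon}$, so the hypotheses hold and the lemma covers all of $\Z_{Q_d}^{*}$ except at most $Q_d^{2+o(1)}(\lambda N_d)^{-1}=Q_d^{1+o(1)}\lambda^{-1/2}$ residues $a_1$. I would place $d\cdot[1,N_d]$ into $\cS$ and adjoin $da_1$ for each of these few exceptional $a_1$ (covered via $m=1$), handling the whole level $d$ at a cost of $O(Q_d^{1+o(1)}\lambda^{-1/2})$ elements of $\cS$. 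For the divisors with $Q_d\le\lambda$ a single new element suffices: adjoining $d$ to $\cS$ covers every $a=da_1$ by taking $m\equiv a_1\bmod Q_d$ with $1\le m\le Q_d\le\lambda$, while $a=0$ is covered by adjoining $0$.

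Summing the per-level costs and using $\tau(q)=q^{o(1)}$ from~\eqref{eq:phitau} together with $\sum_{d\mid q}(q/d)=\sum_{d\mid q}d\le q\,\tau(q)=q^{1+o(1)}$, the total size is $\#\cS\le \lambda^{-1/2}q^{o(1)}\sum_{d\mid q}(q/d)+q^{o(1)}=q^{1+o(1)}\lambda^{-1/2}$, which is the claimed bound. The step I expect to require the most care is the uniformity of the error terms across the $\tau(q)$ divisors: the $o(1)$ of Lemma~\ref{lem:prod} is an as-the-modulus-tends-to-infinity statement, so I must check that every modulus $Q_d$ in play satisfies $Q_d>\lambda\ge q^{\varepsilon}$ and hence grows with $q$, and that a single $o(1)$, losing only a further factor $\tau(q)=q^{o(1)}$, dominates the whole sum. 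The remaining items, namely the $\gcd$ reduction bookkeeping and the verification that the resulting $\cS$ satisfies $\cM\cS=\Z_q$, are routine.
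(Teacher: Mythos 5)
Your proposal is correct and takes essentially the same route as the paper: a decomposition of $\Z_q$ by $d=\gcd(a,q)$ combined with Lemma~\ref{lem:prod} applied to $\cI=[1,\lambda]$ and an initial interval $\cJ$ of length roughly $(q/d)\lambda^{-1/2}$, with the few exceptional residues adjoined to the covering set. The only difference is presentational: the paper packages the divisor reduction as $\omega_{\lambda,\mu}(q)\le\sum_{d\mid q}\omega_{\lambda,\mu}^*(q/d)$ and proves the bound for $\Z_q^*$ at a single modulus, whereas you unfold that sum and invoke the lemma at each modulus $q/d$ directly, also making explicit the trivial cases $q/d\le\lambda$ and $a=0$ that the paper leaves implicit.
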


\begin{proof} We define $\omega_{\lambda,\mu}^*(q)$ in exactly the
same way as   $\omega_{\lambda,\mu}(q)$ with respect to $\Z_q^*$ 
instead of $\Z_q$. 
Collecting  the elements $a \in \Z_q$ with the same value $d = \gcd(a,q)$, 
we see that 
$$
\omega_{\lambda,\mu}(q)  \le \sum_{d \mid q} \omega_{\lambda,\mu}^*(q/d).
$$

We now see from~\eqref{eq:phitau} that it is enough to show that for an 
arbitrary parameter $\varepsilon > 0$, 
we have
\begin{equation}
\label{eq:omega*}
\omega_{\lambda,\mu}^*(q)  \le  \frac{q^{1+\varepsilon}}{\max\{\lambda,\mu\}^{1/2}}
\end{equation}
provided that $\lambda,\mu < q$.  

Without loss of generality we restrict ourselves to the case
$\lambda\ge \mu$ and choose 
$$
\Delta = \sqrt{\lambda},
$$

We can also assume that $\lambda \ge q^{\varepsilon}$
as otherwise the bound is trivial.
Hence 
\begin{equation}
\label{eq:Delta}
\lambda \ge \Delta \ge  q^{\varepsilon/2}.
\end{equation}

Set
$$
\cS_0=\{1,\ldots, \rf{\lambda^{-1} \Delta q^{1 + \varepsilon/2}}\}.
$$
Taking into account~\eqref{eq:Delta} 
we infer from Lemma~\ref{lem:prod} that all but
a set $\cS_1$ of
$$
\# \cS_1  \le q^{1 +o(1)} \Delta^{-1} 
$$
residue classes $a \in \Z_q^*$ can be represented as 
$ms\equiv a\bmod q$ with $1\le m\le \lambda$ and $s\in \cS_0$.

Setting $\cS=\cS_0\cup \cS_1$, after elementary  calculations,
we derive~\eqref{eq:omega*} and conclude the proof. 
\end{proof}

\section{Some Special Cases}

Kl{\o}ve and
Schwartz~\cite{KS} have also studied $\omega_{\lambda,\mu}(p)$ for primes $p$ and very small 
values of $\lambda + \mu$ and presented several explicit formulas.

 First we observe that the expression that appears
in the formula for $\omega_{2,0}(q)$ with odd $q$ is 
very similar to the expression that has been  investigated in~\cite{PomShp}.
Thus several results and methods of~\cite{PomShp} apply directly to 
this expression too.

The density of integers in~\cite[Corollary~3]{KS} can be evaluated via
the classical
Wirsing theorem~\cite{Wirs} and a result of  Chinen and  Murata~\cite{ChMu}.

More precisely, let $\ell_p$ denote the multiplicative order of 2 modulo 
an odd prime $p$. Kl{\o}ve and
Schwartz~\cite[Corollary~3]{KS} show  that for integers $q \equiv 2 \bmod 4$ 
for which  $\ell_p \equiv 0 \bmod 4$ for every odd prime divisor $p\mod q$, 
we have $\omega_{2,1}(q) = (3q+2)/8$ and in fact there is an explicit 
construction that achieves this value, 
see also~\cite[Corollary~3]{KLY}. Note that $\ell_p \equiv 0 \bmod 4$ 
implies that $p \equiv 1 \pmod 4$, since we always have 
$\ell_p \mid p-1$. So in fact we have $q \equiv 2 \bmod 8$ 
and thus $(3q+2)/8 \in \Z$. 

Thus, it is interesting to investigate the number $N(Q)$ of 
such integers $q \le Q$.
We note that the calculations of Kl{\o}ve and Schwartz~\cite[Example~3]{KS} show that 
$N(40002)=1745$.

We now present an asymptotic formula for $N(Q)$. 

We say that a function $f(n)$ defined on positive integers is 
{\it multiplicative\/} for $f(uv) = f(u)f(v)$ for any relatively 
prime integer $u,v\ge 1$.

We recall the classical theorem of
Wirsing~\cite{Wirs}.

\begin{lemma}
\label{lem:Wirs} Assume that a real-valued multiplicative function
$f(n)$ satisfies
the following conditions:
\begin{itemize}
\item $f(n) \ge 0$, $n =1,2, \ldots$;
\item $f(p^\alpha) \le  a b^\alpha$, $\alpha = 2, 3, \ldots$,  for some
constants $a,b> 0$ with $b < 2$;
\item there exists a constant $\tau > 0$ such that
$$
\sum_{p \le x} f(p)  = \(\tau + o(1) \) \frac{x}{\log x}.
$$
\end{itemize}
Then, for $x \to \infty$ we have
$$
\sum_{n \le x} f(n) = \(\frac{1} {e^{\gamma\tau}\Gamma(\tau)} +o(1)\)
\frac{x}{\log x}\prod_{p\leq
x}\,  \sum_{\alpha=0}^\infty \frac{f(p^\alpha)}{p^\alpha},
$$
where $\gamma$ is the Euler constant, and
$$ \Gamma(s) =  \int_0^{\infty} e^{-t} t^{s-1} \; dt  $$
is the $\Gamma$-function.
\end{lemma}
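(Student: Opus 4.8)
The plan is to prove Lemma~\ref{lem:Wirs} by the Selberg--Delange method, reducing the mean value of $f$ to the analytic behaviour near $s=1$ of its generating Dirichlet series. First I would set
$$
F(s)=\sum_{n=1}^\infty\frac{f(n)}{n^s}=\prod_{p}\(\sum_{\alpha=0}^\infty\frac{f(p^\alpha)}{p^{\alpha s}}\),
$$
which converges absolutely for $\Re s>1$. Here the hypothesis $f(p^\alpha)\le ab^\alpha$ with $b<2$ is used twice: it guarantees that each local factor is finite, and that the contribution of the prime powers $\alpha\ge2$ defines a function holomorphic in a half-plane $\Re s>\theta$ for some $\theta<1$, hence in a neighbourhood of $s=1$. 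Thus only the linear terms $\sum_p f(p)p^{-s}$ can be singular at $s=1$.

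The heart of the matter is to show that $F(s)=(s-1)^{-\tau}G(s)$ for a function $G$ that is continuous and non-vanishing up to the line $\Re s=1$ in a neighbourhood of $s=1$. Passing to logarithms, the hypothesis $\sum_{p\le x}f(p)=(\tau+o(1))x/\log x$ translates, by partial summation, into
$$
\sum_p\frac{f(p)}{p^s}=\tau\log\frac{1}{s-1}+\(\text{a term continuous up to }\Re s=1\),
$$
so that $F(s)$ acquires a branch-point singularity of exponent $-\tau$ at $s=1$ and otherwise extends continuously to the boundary. Computing $G(1)=\prod_p(1-1/p)^{\tau}\sum_{\alpha}f(p^\alpha)p^{-\alpha}$ (the product converging precisely because $f(p)$ averages to $\tau$) identifies the constant to be tracked.

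Since $f\ge0$, I would then invoke Delange's Tauberian theorem (equivalently a Perron formula followed by deforming the contour onto a Hankel loop around $s=1$, using $\frac{1}{2\pi i}\int (s-1)^{-\tau}x^{s}s^{-1}\,ds\sim x(\log x)^{\tau-1}/\Gamma(\tau)$), which yields
$$
\sum_{n\le x}f(n)=\(\frac{G(1)}{\Gamma(\tau)}+o(1)\)x(\log x)^{\tau-1},
$$
the factor $\Gamma(\tau)$ arising from the loop integral. Finally I would reconcile this with the stated truncated Euler product: Mertens' theorem $\prod_{p\le x}(1-1/p)^{-1}\sim e^{\gamma}\log x$ gives $\prod_{p\le x}\sum_{\alpha}f(p^\alpha)p^{-\alpha}\sim G(1)(e^{\gamma}\log x)^{\tau}$, so the right-hand side of Lemma~\ref{lem:Wirs} equals $\frac{G(1)}{\Gamma(\tau)}x(\log x)^{\tau-1}(1+o(1))$, i.e.\ the same asymptotic rewritten.

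The main obstacle is the boundary regularity claimed above: the hypothesis controls only the real average $\sum_{p\le x}f(p)$, so establishing the continuity and non-vanishing of $G$ along the segment of $\Re s=1$ near $s=1$ is delicate, and this is exactly what rules out a naive analytic continuation of $F$ to the left of the line. For this reason a fully faithful proof under precisely these hypotheses is Wirsing's original elementary argument: convolving $f$ with a von Mangoldt--type weight expresses $\sum_{n\le x}f(n)\log n$ as a sum over prime powers of shifted partial sums of $f$, which after partial summation becomes an approximate renewal equation for $S(x)=\sum_{n\le x}f(n)$; solving it asymptotically produces both the factor $(\log x)^{\tau-1}$ and the constant $1/(e^{\gamma\tau}\Gamma(\tau))$, with $\Gamma(\tau)$ now emerging from the normalising integral of the renewal kernel. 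In either route I expect the control of the error terms and the exact determination of the constant to be the most laborious steps.
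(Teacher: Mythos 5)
The paper itself contains no proof of Lemma~\ref{lem:Wirs}: it is quoted as Wirsing's classical theorem and supported only by the citation~\cite{Wirs}, so there is no internal argument to compare yours against, and your attempt must stand on its own. As it stands, it is a plan rather than a proof. Your primary route (Selberg--Delange) is one you yourself correctly concede is unavailable here: condition (iii) is a purely real-axis hypothesis, and with an error term $o(x/\log x)$ it gives neither analytic continuation nor even continuity of the generating Dirichlet series on the line $\Re s=1$ away from $s=1$, which Delange's Tauberian theorem or a Hankel-contour deformation requires. Your fallback is a one-sentence pointer to ``Wirsing's original elementary argument''; but the renewal-equation analysis you allude to \emph{is} the content of the theorem (and is a long, delicate argument), and you carry out none of it.

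Beyond incompleteness, there is a concrete error: you assert that the truncated-product conclusion is ``the same asymptotic rewritten'' as $\frac{G(1)}{\Gamma(\tau)}\,x(\log x)^{\tau-1}$ with $G(1)=\prod_p(1-1/p)^{\tau}\sum_{\alpha}f(p^\alpha)p^{-\alpha}$, ``the product converging precisely because $f(p)$ averages to $\tau$''. Hypothesis (iii) does not imply convergence of that product: by partial summation it yields only $\sum_{p\le x}f(p)/p=(\tau+o(1))\log\log x$, and the series $\sum_p\bigl(f(p)-\tau\bigr)/p$ may diverge. For example, take $f$ multiplicative with $f(p^\alpha)=0$ for $\alpha\ge 2$ and $f(p)=\tau+\tau/\sqrt{\log\log p}$ for large $p$: all three hypotheses hold, yet the partial products $\prod_{p\le x}(1-1/p)^{\tau}\bigl(1+f(p)/p\bigr)$ grow like $\exp\bigl(2\tau\sqrt{\log\log x}\bigr)$, so no asymptotic with a \emph{constant} in place of the truncated product can hold. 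This is exactly why Wirsing's theorem is stated with $\prod_{p\le x}$: the truncated form is strictly more general, and any Tauberian scheme whose output is a constant times $x(\log x)^{\tau-1}$ --- which is what both of your routes produce --- is structurally unable to prove the lemma as stated. (In the paper's actual application, Theorem~\ref{thm:dens}, the relevant product does converge because Lemma~\ref{lem:Ord Mert} supplies the convergent constant $\kappa$; but a proof of Lemma~\ref{lem:Wirs} cannot assume such extra input.)
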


Let $Q_4(x)$ denote the number of odd primes $p\le x$ with 
$\ell_p \equiv 0 \bmod 4$. 
Then by~\cite[Theorem~1.1]{ChMu} we obtain the following result.

\begin{lemma}
\label{lem:order} We have 
$$
Q_4(x)= \(\frac{1}{3} + o(1) \) \frac{x}{\log x}
$$
as $Q \to \infty$. 
\end{lemma}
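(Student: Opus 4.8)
The plan is to deduce this from the density result of Chinen and Murata \cite[Theorem~1.1]{ChMu}, after translating the divisibility condition $4 \mid \ell_p$ into the language of the \emph{residual order} of $2$ that underlies their theorem. First I would fix a primitive root $g$ modulo the odd prime $p$ and write $2 \equiv g^a \pmod p$, so that $\ell_p = (p-1)/\gcd(a,p-1)$. Writing $p-1 = 2^e u$ with $u$ odd and $a = 2^f v$ with $v$ odd (so that $0 \le f \le e$), an elementary computation shows that the $2$-part of $\ell_p$ equals $2^{e-f}$. Hence $4 \mid \ell_p$ holds precisely when $e \ge 2$ and $f \le e-2$; equivalently, $4 \mid \ell_p$ is the condition $\ell_p \equiv 0 \pmod 4$ on the residual order, which is exactly the quantity whose density over the primes is evaluated in \cite[Theorem~1.1]{ChMu}.

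Next I would record that \cite[Theorem~1.1]{ChMu} furnishes the natural density of the set of odd primes $p$ with $\ell_p \equiv 0 \pmod 4$, and that this density equals $1/3$. A heuristic for the value can be obtained by splitting the primes according to $e = v_2(p-1)$, whose density is $2^{-e}$, and, for each $e \ge 2$, estimating the density of primes with $f \le e-2$ under the naive assumption that ``$2$ is a $2^{j}$-th power modulo $p$'' occurs with conditional probability $2^{-j}$; this suggests the contribution $\sum_{e \ge 2} 2^{-e}\bigl(1 - 2^{-(e-1)}\bigr) = \tfrac12 - \tfrac16 = \tfrac13$. The rigorous values of these conditional densities are governed by the splitting of $p$ in the Kummer extensions $\mathbb{Q}(\zeta_{2^{e}}, 2^{1/2^{j}})$ and are supplied by the Chebotarev density theorem, which is the content of \cite{ChMu}.

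Finally, since \cite[Theorem~1.1]{ChMu} gives a \emph{natural} density, the passage to the asymptotic count $Q_4(x) = (\tfrac13 + o(1))\,x/\log x$ is immediate from the prime number theorem, and this step is routine. The substance of the argument, and the main obstacle, lies entirely in the density computation, and specifically in the entanglement of the fields $\mathbb{Q}(\sqrt 2)$ and $\mathbb{Q}(\zeta_{8})$: because $\sqrt 2 \in \mathbb{Q}(\zeta_{8})$, the events controlling the $2$-part of $\ell_p$ are \emph{not} independent, so the naive product heuristic above fails term by term (for instance every $p \equiv 5 \pmod 8$ forces $f = 0$, hence $4 \mid \ell_p$). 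Reconciling these degrees requires the correct Chebotarev bookkeeping in the composite fields $\mathbb{Q}(\zeta_{2^{e}}, 2^{1/2^{j}})$, and it is precisely this entangled computation that is carried out in \cite{ChMu}; I would therefore invoke their theorem directly rather than reproduce it.
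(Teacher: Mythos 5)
Your proof is structurally identical to the paper's: the paper's entire proof of this lemma is the one sentence citing \cite[Theorem~1.1]{ChMu}, and you invoke the very same theorem, adding only the (correct) translation of $4\mid\ell_p$ into the conditions $e\ge 2$ and $f\le e-2$, where $e=v_2(p-1)$, $f=v_2(\ind 2)$ and $v_2$ denotes the $2$-adic valuation.

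The genuine problem lies exactly at the step you flag and then dismiss. You note, correctly, that the naive independence heuristic $\sum_{e\ge 2}2^{-e}(1-2^{-(e-1)})=1/3$ fails term by term because $\sqrt 2\in\mathbb{Q}(\zeta_8)$: for instance every $p\equiv 5\pmod 8$ has $4\mid\ell_p$, so the stratum $e=2$ contributes $1/4$ rather than the naive $1/8$. But you then assert that the entangled Chebotarev computation nevertheless returns the total $1/3$; it does not, because the corrections do not cancel. With the entangled degrees $[\mathbb{Q}(\zeta_{2^e},2^{1/2^{e-1}}):\mathbb{Q}]=2^{2e-3}$ for $e\ge 3$ (halved precisely because $\sqrt 2\in\mathbb{Q}(\zeta_{2^e})$), the primes $p\equiv 1\pmod 8$ with $4\nmid\ell_p$ are exactly those for which $2$ is a $2^{e-1}$-th power residue with $e=v_2(p-1)$, a set of density
$$
\sum_{e\ge 3}\left(2^{-(2e-3)}-2^{-(2e-2)}\right)=\sum_{e\ge 3}2^{-(2e-2)}=\frac{1}{12},
$$
so the class $p\equiv 1\pmod 8$ contributes $1/4-1/12=1/6$ and the total density is $1/4+1/6=5/12$, not $1/3$. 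Cross-checks: the same bookkeeping reproduces Hasse's classical density $7/24$ for the primes with $\ell_p$ odd; already the strata $e=2,3,4$ alone give density at least $1/4+1/16+3/64=23/64>1/3$; and $19$ of the $45$ odd primes below $200$ satisfy $4\mid\ell_p$, a proportion $\approx 0.42\approx 5/12$. In short, $1/3$ is the unentangled (generic-$a$) value --- precisely the naive sum you wrote down --- while for $a=2$ the entanglement pushes the density up to $5/12$. So your concluding step, that \cite[Theorem~1.1]{ChMu} yields the constant $1/3$ for $a=2$, is unsupported and appears to be false: either that theorem does not cover $a=2$ in the form quoted, or it is being misread. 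Since the paper's own proof is the same bare citation, this objection applies equally to the lemma as stated, and it propagates: the exponents $1/3$ in Lemma~\ref{lem:Ord Mert} and $2/3$ in Theorem~\ref{thm:dens} would become $5/12$ and $7/12$ respectively.
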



%

Now we establish an analogue of the Mertens formula.

\begin{lemma}
\label{lem:Ord Mert}  There exists an absolute constant
$\eta$ such that
$$\prod_{\substack{3 \le p\le x\\ \ell_p  \equiv 0 \bmod 4}}  \(1 + \frac{1}{p-1}\)  = \eta
(\log x)^{1/3} + O\((\log x)^{-2/3}\).
$$
\end{lemma}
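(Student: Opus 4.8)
The plan is to pass to logarithms and reduce the claim to a second-order asymptotic for a sum of reciprocals of the relevant primes. Writing $P(x)$ for the product in question and using $\log\(1+\frac{1}{p-1}\) = -\log\(1-\tfrac1p\) = \tfrac1p + O(p^{-2})$, I would first record
$$
\log P(x) = \sum_{\substack{3\le p\le x\\ \ell_p\equiv0\bmod4}}\frac1p + c_0 + O\!\left(\frac1x\right),
$$
where $c_0 = \sum_{p:\,\ell_p\equiv0\,(4)}\big(\log(1+\tfrac{1}{p-1}) - \tfrac1p\big)$ is an absolutely convergent constant (each summand is $O(p^{-2})$, and the tail beyond $x$ contributes $O(1/x)$). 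Thus the whole statement becomes equivalent to the prime-sum asymptotic
$$
\sum_{\substack{3\le p\le x\\ \ell_p\equiv0\bmod4}}\frac1p = \tfrac13\log\log x + c_1 + O\!\left(\frac{1}{\log x}\right)
$$
for a suitable constant $c_1$. Exponentiating then gives $P(x) = \eta(\log x)^{1/3}\(1 + O(1/\log x)\)$ with $\eta = e^{c_0+c_1}$, and since $(\log x)^{1/3}\cdot O(1/\log x) = O\((\log x)^{-2/3}\)$ this is exactly the asserted formula.

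To obtain the prime-sum asymptotic I would use Abel summation against the counting function $Q_4(t)$. The subtle point, which I expect to be the main obstacle, is that Lemma~\ref{lem:order} in its stated form $Q_4(x)=(\tfrac13+o(1))x/\log x$ is by itself insufficient: partial summation with only an $o(1)$ density yields $\tfrac13\log\log x + o(\log\log x)$ and cannot isolate the constant $c_1$, let alone the error $O(1/\log x)$. What is needed is a quantitative form of the density, namely $Q_4(x) = \tfrac13\,\mathrm{Li}(x) + O\(x/(\log x)^{2}\)$ (any power-of-logarithm saving beyond the main term suffices). This is available from the analytic input underlying \cite{ChMu}: the set $\{p:\ell_p\equiv0\bmod4\}$ decomposes, according to the value of the $2$-adic valuation $v_2(p-1)$ and the largest power of two for which $2$ is a power residue modulo $p$, into Frobenius classes in the $2$-power Kummer extensions $\mathbb{Q}\(\zeta_{2^{k}},2^{1/2^{j}}\)$, and the effective Chebotarev density theorem supplies the required error term (the natural density $\tfrac13$ is precisely the series evaluated in \cite{ChMu}). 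Feeding this estimate into Abel summation produces the displayed prime-sum asymptotic with a well-defined constant $c_1$.

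I expect the genuine difficulty to lie entirely in this upgrade of Lemma~\ref{lem:order} to an effective density estimate, together with the uniformity needed to sum the contributions of the infinitely many, but geometrically sparse in $v_2(p-1)$, Frobenius classes. Once the effective density is in hand the remaining steps — the elementary $\log(1+\tfrac{1}{p-1})$ expansion, the Abel-summation bookkeeping, and the final exponentiation — are routine, and the constant $\eta$ emerges automatically as $\exp(c_0+c_1)$. As a consistency check one can observe that combining this Mertens-type product with Wirsing's theorem (Lemma~\ref{lem:Wirs}), applied to the $0/1$ multiplicative function supported on integers all of whose odd prime factors satisfy $\ell_p\equiv0\bmod4$, reproduces the expected order $Q(\log Q)^{-2/3}$ for the counting function $N(Q)$, which is the ultimate target of this section.
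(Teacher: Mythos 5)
Your proposal follows the same route as the paper's own proof: reduce via $\log\bigl(1+\frac{1}{p-1}\bigr)=\frac1p+O(p^{-2})$ to a Mertens-type asymptotic for $\sum_{3\le p\le x,\ \ell_p\equiv 0\bmod 4}1/p$, prove that asymptotic by partial summation against the counting function $Q_4(t)$, and exponentiate at the end. What is notable is that the obstacle you single out is real, and it is in fact a gap in the paper's own write-up: the paper inserts only the $o(1)$ form of Lemma~\ref{lem:order} into the partial-summation identity and asserts
$$
\thetaord(x)=\frac{Q_4(x)\log x}{x}+\int_2^x\frac{\log t-1}{t^2}\,Q_4(t)\,dt
=\frac13\int_2^x\frac{\log t-1}{t^2}\,\pi(t)\,dt+O(1),
$$
whereas $Q_4(t)-\frac13\pi(t)=o(t/\log t)$ only bounds the discrepancy by $\int_2^x o(1/t)\,dt=o(\log x)$. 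From the density statement alone one therefore gets merely $R(x)=o(\log x)$, hence $\frac13\log\log x+o(\log\log x)$, with neither the constant $\kappa$ nor the $O(1/\log x)$ error; indeed no density-only hypothesis can pin down the constant, since relative fluctuations of size $o(1)$ in $Q_4$ that decay sufficiently slowly shift $\sum 1/p-\frac13\log\log x$ by an unbounded amount. So the quantitative upgrade you call for (effective Chebotarev input behind~\cite{ChMu}, with the sparse classes of large $v_2(p-1)$ controlled by Brun--Titchmarsh) is exactly what is needed to make the proof complete; your proposal is the corrected version of the paper's argument.

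One correction to your accounting: a saving of a single power of logarithm, $Q_4(t)=\frac13\mathrm{Li}(t)+O\bigl(t/(\log t)^2\bigr)$, is not quite enough for the error term as stated. It yields $\thetaord(t)=\frac13\log t+O(\log\log t)$, hence a final error $O(\log\log x/\log x)$ in the prime sum and, after exponentiation, $\eta(\log x)^{1/3}+O\bigl((\log x)^{-2/3}\log\log x\bigr)$. To obtain the stated $O\bigl((\log x)^{-2/3}\bigr)$ you need slightly more, say $Q_4(t)-\frac13\mathrm{Li}(t)\ll t/(\log t)^{2+\delta}$ for some $\delta>0$, so that $\int_2^\infty\frac{\log t}{t^2}\bigl|Q_4(t)-\frac13\mathrm{Li}(t)\bigr|\,dt$ converges and $R(t)=O(1)$, after which the bookkeeping you describe goes through verbatim. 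This caveat is harmless for the application (Theorem~\ref{thm:dens} uses only the main term), but it matters if the error term in the lemma is taken literally.
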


\begin{proof} 
In view of the fact that
$$
\log\(1 + \frac{1}{p-1}\)=\frac1p+O\(\frac1{p^2}\)
$$
it is equivalent to prove that there exists an absolute constant $\kappa$ such that
\begin{equation}\label{eq: prod sum}
\sum_{\substack{3 \le p\le x\\ \ell_p  \equiv 0 \bmod 4}}\frac1p
= \frac{1}{3}\log\log x+\kappa+ O\(\frac{1}{\log x}\).
\end{equation}
Let us define the function $\thetaord(x)$ by the identity
$$
\thetaord(x)=\sum_{\substack{3 \le p\le x\\ \ell_p  \equiv 0 \bmod 4}}
\frac{\log p}p.
$$
Observe that by Lemma~\ref{lem:order}, we have
\begin{equation*}
\begin{split}
\thetaord(x)&=
\frac{Q_4(x)  \log x}{x} + \int_2^x \frac{\log t -
1}{t^2} Q_4(t)\,dt \\
& = \frac{1}{3} \int_2^x \frac{\log t - 1}{t^2} \pi(t)\,dt  + O\(1\),
\end{split}
\end{equation*}
where, as usual, $\pi(t)$ denotes the number of primes $p \le t$.

The same arguments also imply that
$$
\sum_{ p\le x } \frac{\log p}{p} = \int_2^x\frac{\log t  - 1
}{t^2}\pi(t)\,dt + O\( 1\)
$$
and  the  Mertens theorem, see~\cite[Chapter~1, Theorem~3.1]{Prach}, yields
$$\thetaord(x)=\sum_{\substack{3 \le p\le x\\ \ell_p  \equiv 0 \bmod 4}}\frac{\log p}p=
\frac{1}{3} \log x+R(x)
$$
for some function $R(x)$ with $R(x)=O(1)$. We now derive
\begin{equation*}
\begin{split}
& \sum_{\substack{3 \le p\le x\\ \ell_p  \equiv 0 \bmod 4}}\frac1p 
= \frac{\thetaord(x)}{\log x}
+\int_2^x\frac{\thetaord(t)}{t (\log t)^2} \,dt \\
&\qquad = \frac{1}{\log x}\(\frac{1}{3}  \log x + R(x)\) + 
\int_2^x\frac{1}{t (\log t)^2}\(\frac{1}{3} \log t + R(t)\) \,dt \\
&\qquad =  \frac{1}{3}  \log \log x  - \frac{1}{3}  \log \log 2 + \frac{1}{3}  +
\int_2^x\frac{
R(t)}{t (\log t)^2} \,dt+  O\(\frac{1}{\log x}\)\\
&\qquad =  \frac{1}{3}  \log \log x  - \frac{1}{3}  \log \log 2 + \frac{1}{3}  +
\int_2^\infty \frac{ R(t)}{t (\log t)^2} \,dt+ O\(\frac{1}{\log x}\)
\end{split}
\end{equation*}
(here the existence of the improper integral follows 
from $R(t)= O(1)$).
So we now obtain~\eqref{eq: prod sum} with
$$
\kappa =    \frac{1-\log \log 2}{3}
+ \int_2^\infty \frac{ R(t)}{t (\log t)^2} \,dt, 
$$
which concludes the proof. 
\end{proof}

We are now ready to establish an asymptotic 
formula for $N(Q)$.
 
\begin{theorem}
\label{thm:dens} There is an absolute constant $\rho> 0$ such that 
We have
$$
N(Q) =  (\rho + o(1)) \frac{Q}{(\log Q)^{2/3}}
$$
as $Q \to \infty$. 
\end{theorem}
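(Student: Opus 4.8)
The plan is to count the integers $q \le Q$ of the form $q \equiv 2 \bmod 4$ all of whose odd prime divisors $p$ satisfy $\ell_p \equiv 0 \bmod 4$, by expressing the counting function as a sum of a suitable multiplicative function and applying Wirsing's theorem (Lemma~\ref{lem:Wirs}). First I would write $q = 2m$ with $m$ odd (the condition $q \equiv 2 \bmod 4$ forces exactly one factor of $2$), so that $N(Q)$ counts odd $m \le Q/2$ whose every prime divisor $p$ satisfies $\ell_p \equiv 0 \bmod 4$. I would then define the multiplicative indicator function $f(m)$ supported on such $m$, namely $f(m) = 1$ if $m$ is odd and every prime $p \mid m$ has $\ell_p \equiv 0 \bmod 4$, and $f(m) = 0$ otherwise. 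With this definition $N(Q) = \sum_{m \le Q/2} f(m)$, reducing the problem to estimating a summatory function of a non-negative multiplicative function.

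Next I would verify the three hypotheses of Lemma~\ref{lem:Wirs} for $f$. Non-negativity is immediate, and since $f(p^\alpha) = f(p) \le 1$ for all $\alpha$, the growth condition $f(p^\alpha) \le a b^\alpha$ holds trivially with, say, $a = b = 1$. The crucial third hypothesis is the prime-sum asymptotic $\sum_{p \le x} f(p) = (\tau + o(1)) x/\log x$; here $f(p) = 1$ exactly when $p$ is an odd prime with $\ell_p \equiv 0 \bmod 4$, so this sum equals $Q_4(x)$, and Lemma~\ref{lem:order} gives precisely $\tau = 1/3$. Wirsing's theorem then yields
$$
\sum_{m \le x} f(m) = \(\frac{1}{e^{\gamma/3}\Gamma(1/3)} + o(1)\) \frac{x}{\log x} \prod_{p \le x} \(1 + \frac{f(p)}{p} + \frac{f(p)}{p^2} + \cdots\),
$$
where the local factor at a prime with $f(p) = 1$ is $1 + \frac{1}{p-1}$ and equals $1$ otherwise.

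The main obstacle is controlling the Euler product $\prod_{p \le x}(1 + \frac{f(p)}{p} + \cdots)$, which over the relevant primes is exactly $\prod_{3 \le p \le x,\, \ell_p \equiv 0 \bmod 4}(1 + \frac{1}{p-1})$. This is where Lemma~\ref{lem:Ord Mert} does the heavy lifting: it shows this product is $\eta (\log x)^{1/3} + O((\log x)^{-2/3})$ for an absolute constant $\eta$, so that the product contributes a factor of order $(\log x)^{1/3}$. Combining this with the $x/\log x$ factor from Wirsing gives $\sum_{m \le x} f(m) \asymp x/(\log x)^{2/3}$, and setting $x = Q/2$ produces $N(Q) = (\rho + o(1)) Q/(\log Q)^{2/3}$ after absorbing the factor of $1/2$ and the constants $\eta$, $e^{-\gamma/3}/\Gamma(1/3)$ into a single absolute constant $\rho > 0$; note $\log(Q/2) = (1 + o(1))\log Q$ so the power of $\log$ is unaffected. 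The one genuine subtlety is ensuring the error term in Lemma~\ref{lem:Ord Mert} is compatible with the $o(1)$ in Wirsing's formula, but since both errors are of smaller order than the main term, they merge cleanly into the overall $o(1)$.
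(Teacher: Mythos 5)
Your proposal is correct and follows essentially the same route as the paper: the same multiplicative indicator function $f$ (the paper defines it on prime powers and extends multiplicatively, which agrees with your definition), the reduction $N(Q)=\sum_{n\le Q/2+O(1)}f(n)$, Wirsing's theorem with $\tau=1/3$ via Lemma~\ref{lem:order}, and Lemma~\ref{lem:Ord Mert} to evaluate the Euler product as $\eta(\log x)^{1/3}+O((\log x)^{-2/3})$. The only cosmetic difference is your cutoff $Q/2$ versus the paper's $(Q-2)/2$, which is immaterial to the asymptotics.
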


\begin{proof} Let us define the multiplicative function $f(n)$ 
by its values on prime powers $p^\alpha$, $\alpha =1,2, \ldots$, 
$$
f(p^\alpha) = 
\left\{  \begin{array}{ll}
1,&  \ \text{if}\ p\ge 3\ \text{and}\ \ell_p  \equiv 0 \bmod 4; \\
0, & \  \text{otherwise}.
\end{array} \right.
$$
Then 
$$
N(Q) = \sum_{n\le (Q-2)/2}f(n).
$$
Applying Lemma~\ref{lem:Wirs}, where by Lemma~\ref{lem:order}
we can take $\tau = 1/3$, we derive
\begin{equation}
\label{eq:NQ}
N(Q) =  \(\frac{1} {e^{\gamma/3}\Gamma(1/3)} +o(1)\)
\frac{Q}{2\log Q} \prod_{p\le (Q-2)/2} \sum_{\alpha=0}^\infty \frac{f(p^\alpha)}{p^\alpha}.
\end{equation}

We note that 
$$
\sum_{\alpha=0}^\infty \frac{f(p^\alpha)}{p^\alpha} = 
\left\{  \begin{array}{ll}
1 + 1/(p-1) ,& \  \text{if}\ p\ge 3\ \text{and}\ \ell_p  \equiv 0 \bmod 4; \\
1, &  \ \text{otherwise}.
\end{array} \right.
$$
Hence,  using Lemma~\ref{lem:Ord Mert}, we conclude the proof. 
\end{proof}

It is certainly interesting to get a closed form expression 
for the constant $\rho$ in Theorem~\ref{thm:dens} or 
at least evaluate it numerically.

\section{Remarks}

 Kl{\o}ve and
Schwartz~\cite[Theorem~3]{KS} showed that if $g$ is  a primitive
root modulo a prime $p$ and the interval $\cM=\{ -\mu,-\mu+1,\ldots, \lambda\}$
 contains $\delta$ consecutive powers of $g$,
then
$$
\omega_{\lambda,\mu}(p)\le  \rf{\frac{p-1}{\delta}} +1.
$$
Unfortunately, one expects that $\delta$ is rather small if, say
$\lambda+\mu <p/2$ and thus this approach does not seem to be able
to produce strong results. For example, by a result of
Bourgain~\cite[Theorem~B]{Bour4}, for a primitive root $g$ modulo a
prime $p$, the sequence of fractional parts
$$
\left\{g^n/p\right\}, \qquad n=1, \ldots, \delta,
$$
is uniformly distributed modulo $1$,  provided that
$\delta > p^{C/\log \log p}$
for some absolute constant $C$. Thus, for any fixed $\varepsilon> 0$
and a sufficiently large $p$, we have
$$
\delta \le p^{C/\log \log p}
$$
for any $\lambda$ and $\mu$ with $\lambda+\mu < (1-\varepsilon)p$.
Using~\cite{Bour1,Bour2,Bour3} one can obtain similar results for arbitrary composite
moduli.

Furthermore, if $\lambda+\mu$ is small (but not very small),
namely if
$$
p^\varepsilon \le \max\{\lambda, \mu\} < (\sqrt{1/2}-\varepsilon)p^{1/2},
$$
then using a different result of Bourgain~\cite[Theorem~1]{Bour5} we can get
$$
\delta = o((\log p)^{\psi(p)})
$$
where $\psi(p)$ is an arbitrary function with $\psi(p)\to \infty$ as $p \to \infty$.
Indeed, if say $\lambda > 0$ then by~\cite[Theorem~1]{Bour5} the set
$$
\left\{m g^n/p\right\}, \qquad m =1, \ldots, \lambda, \ n=1, \ldots, \delta,
$$
is uniformly distributed modulo $1$. On the other hand if $g^n \in \cM$, $n=1, \ldots, \delta$,
then these elements are  all at the distance at least $1/2 - (\sqrt{1/2}-\varepsilon)^2 =
(\sqrt{2} - \varepsilon)\varepsilon$ from $1/2$.

We call a set $\cS$ of size $ \# \cS=N$ with
$$
\nu_{\lambda,\mu}(q,N)=\#(\cM \cS)= \#\cM \# \cS=(\mu+\lambda)N
$$ 
a $(\lambda,\mu;q)$-\emph{packing set} of order $N$.
In~\cite{KBE,KLNY,KLY}, the authors applied \emph{packing sets} to
define codes that correct single limited-magnitude errors. It is
certainly interesting to find constructions of such sets and in
particular obtain non-trivial estimates on the introduced quantity
in~\cite{KS}
$$
\vartheta_{\lambda,\mu}(q)=\max\{N~:~\nu_{\lambda,\mu}(q,N)=(\mu+\lambda)N\}.
$$
We note that it is very easy to achieve an asymptotically optimal value of
$\nu_{\lambda,\mu}(q,N)$.
Indeed, we may restrict ourselves to the case $\lambda\ge \mu$.
If $N\lambda < q$, we simply take  $\cS=\{1,\ldots,N\}$ and using the
classical asymptotic formula for the
average value of the square of the divisor function, see~\eqref{eq:sum tau2}, 
and  the Cauchy-Schwarz inequality, we obtain
$$
\(\#\cM \# \cS\)^2 \le  \#(\cM \cS) \sum_{k\le N\lambda} \tau^2(k) \ll  \#(\cM \cS)
\#\cM \# \cS \(\log q\)^3.
$$
Hence, $\nu_{\lambda,\mu}(q,N) \ge  c(\lambda +\mu) N  (\log q)^{-3}$ for
some absolute constant $c>0$. In fact the result of Koukoulopoulos~\cite{Kou}
yields an even tighter bound. 

If $N\lambda \ge q$, we consider only the subset $\{1,\ldots  \fl{q/N}\}$ of $\cM$ and get similarly the bound 
$\nu_{\lambda,\mu}(q)  \ge c  q (\log q)^{-3}$. However, investigating
when $\nu_{\lambda,\mu}(q,N)=(\mu+\lambda)N$ and thus estimating
$\vartheta_{\lambda,\mu}(q)$ seems to be more challenging.

\section*{Acknowledgements}

Parts of this paper were written during a very pleasant visit of the first
 author to RICAM, Austrian Academy of Sciences in Linz. He wishes to thank for the hospitality.

During the preparation of this work,
Z.X.C. was partially supported by the National Natural Science
Foundation of China grant 61373140
 and the Special Scientific Research Program
in Fujian Province Universities of China  under grant  JK2013044;
I.S. was partially supported by Australian Research Council grant DP130100237
and by Singapore National Research Foundation grant
CRP2-2007-03.

\end{document}